\newtheorem{lemma}{Lemma}
\newtheorem{proposition}{Proposition}
\newtheorem{assumption}{Assumption}
\def\realnumbers{\mathbb{R}}
\def\complexnumbers{\mathbb{C}}
\def \graph	{\mathcal{G}}		
\def \nodes	{\mathcal{V}}		
\def \adm {{\boldsymbol Y}}
\def \green {{\boldsymbol X}}
\def \edges	{\mathcal{E}}		
\def \topology {{\boldsymbol T}}
\def \nonodes {{n}}
\def \noswitches {{r}}
\def \nosensors {{p}}
\def \noedges {{w}}
\def\1{{\mathbf{1}}}
\begin{document}

\title{Data-Driven Approach for Distribution Network Topology Detection}

\author{
\IEEEauthorblockN{G. Cavraro}
\IEEEauthorblockA{DEI - University of Padova\\
Padova, Italy\\
cavraro@dei.unipd.it
}
\and
\IEEEauthorblockN{R. Arghandeh}
\IEEEauthorblockA{EECS - U. C. Berkeley\\
Berkeley, USA\\
arghandeh@berkeley.edu
}
\and
\IEEEauthorblockN{K. Poolla}
\IEEEauthorblockA{EECS - U.C. Berkeley\\
Berkeley, USA\\
poolla@berkeley.edu
}
\and
\IEEEauthorblockN{A. von Meier}
\IEEEauthorblockA{EECS - U.C. Berkeley\\
Berkeley, USA\\
vonmeier@berkeley.edu
}

}

\maketitle

\begin{abstract}
This paper proposes a data-driven approach to detect the switching actions and  topology transitions in distribution networks. It is based on the real time analysis of time-series voltages measurements. The analysis approach draws on data from high-precision phasor measurement units ($\mu$PMUs or synchrophasors) for distribution networks. The key fact is that time-series measurement data taken from the distribution network has specific patterns representing state transitions such as topology changes. The proposed algorithm is based on comparison of actual voltage measurements with a library of signatures derived from the possible topologies simulation. The IEEE 33-bus model is used for the algorithm validation. 
\end{abstract}

\section{Introduction}
Different tools have been developed and implemented to monitor distribution network behavior with more detailed and temporal information, such as SCADA, smart meters and line sensors. Creating observability out of disjointed data streams still remains a challenge, though\cite{vonMeier2014Chap}.  The cost for monitoring systems in distribution networks still remains a barrier to equipping all nodes with measurement devices. To some extent, a capable distribution state estimator can compensate for the lack of measurement data to support system observability. However, topology errors will easily downgrade state estimator accuracy. Topology detection is a key component for different real-time operation and control functions.
%
%
Most of literature on topology detection is based on state estimator (SE) results and measurement matching with different topologies. In \cite{korres2012} authors propose a state estimation algorithm that incorporates switching device status as additional state variables. A normalized residual test is used to identify the best estimate of the topology. SE-based algorithms are easy to implement, but their accuracy is limited to that of the state estimator. They are also sensitive to measurement device placement. In \cite{sharon2012}, the authors provide a tool for choosing sensor placement for topology detection. Given a particular placement of sensors, the tool reveals the confidence level at which the status of switching devices can be detected. Authors in \cite{Ciobotaru2007} are focused on estimating the impedance at the feeder level. However, even a perfect identification of network impedance cannot always guarantee the correct topology, since multiple topologies could present very similar impedances.


In this paper,  a real time topology detection algorithm is proposed based on time series analysis of phasor measurement unit (PMU) data. This approach is inspired by high-precision phasor measurement units for distribution systems, called micro-synchrophasors or ($\mu$-PMU), with whose development the authors are involved \cite{vonMeier2014}. The main idea derives from the fact that time-series data from a dynamic system show specific patterns regarding system state transitions, a signature is left from each topology change. 
The algorithm is based on the comparison of the trend vector, built from system observations, with a library of signatures derived from the possible topology transitions. The topology detection results are impacted by load uncertainty and measurement device accuracy. Therefore, the analysis takes load dynamics and measurement error into account. 
The topology detection accuracy is also depends on the number of $\mu$-PMUs. But, the simulations shows topology detection is converge robustly even with limited measurement devices. 


\section{Distribution Network Model and Physical Topology}
\label{sec:model}
Given a matrix $W$, we denote its element-wise complex conjugate by $\overline{W}$ , its transpose by $W^T$  and its conjugate transpose by $W^*$. We denote the matrices of the absolute value, of the real and of the imaginary part of $W$ by $|W|$, $\Re(W)$ and by $\Im(W)$ , respectively, and with $|W|$ . We denote the entry of $W$ that belongs to the $j$-th row and to the $k$-th column by $[W]_{jk}$. 
Given a vector $v$, $[v]_j$ will denote its $j$-th entry, while $[v]_{-j}$ the subvector of $v$, in which the $j$-th entry has been eliminated. Given two vectors $v$ and $w$, we denote by $\langle v,w \rangle$ their inner product $v^*w$. 
We define the column vector of all ones by $\1$
We associate with the electric grid the directed graph $\graph= (\nodes, \edges)$, where $\nodes$ is the set of nodes (the buses), with cardinality $\nonodes$ and $\edges$ is the set of edges (the electrical lines connecting them), with cardinality $\noedges$; the set $\mathcal S$ of the switched deployed in the electrical grid, with cardinality $\noswitches$ and the set $\mathcal P$ is the set of the electrical grid nodes endowed with voltage phasor measurement units (PMUs), with cardinality $\nosensors$.
Let $A \in \{0, \pm 1\}^{\noedges \times \nonodes}$ 
be the incidence matrix of the graph $\graph$, $A=
\begin{bmatrix}
a_1^T&\dots&
a_\noedges^T
\end{bmatrix}^T
$
where $a_j$ is the $j$-th row of $A$, whose elements are all zeroes except for the entries associated to the nodes connected by the $j$-th edge, for which the elements equal $+1$ or $-1$,  respectively.  If the graph $\graph$ is connected (i.e. for every pair of nodes there is a path connecting them), then $\1$ is the only vector in the null space $\ker A$, $\1$ being the column vector of all ones.
In this study, we limit our study to the steady state behavior of the system, when all voltages and currents are sinusoidal signals waving at the same frequency $\omega_0$. Thus, they can be expressed via a complex number whose magnitude corresponds to the signal root-mean-square value, and whose phase corresponds to the phase of the signal with respect to an arbitrary global reference. Therefore, $x$ represents the signal
$
x(t) = |x| \sqrt{2} \sin(\omega_0 t + \angle x).
$

We will denote the vector of the voltages as $u \in \complexnumbers^\nonodes$, the vector of the currents as $i \in \complexnumbers^\nonodes$, and the vectors of the powers as $s = p + i q \in \complexnumbers^\nonodes$, with $p,q \in \realnumbers^\nonodes$  are the active and the reactive power injected at node $v$. The state of the switches is 
 $\sigma \in [0,1]^ \noswitches$, where $[\sigma]_l = 0$ if the switch $v$ is open, $[\sigma]_l = 1$ if the switch $l$ is closed. The measured grid voltages are collected in $y \in \complexnumbers^\nosensors$. We define the \emph{trend vector} $\delta(t_1,t_2)\in \complexnumbers^\nosensors$, as the difference between phasorial voltages taken at the two time instants $t_1$ and $t_2$. i.e. $ \delta(t_1,t_2)={u}(t_1)-{u}(t_2)$.
We assume that the deployed PMUs in the distribution network take measurements at the frequency $f$. 

We consider a topology $\topology_\sigma$ which switches status are described by $\sigma$. Its bus admittance matrix $\adm_\sigma$ is defined as
\begin{equation}
[\adm_\sigma]_{jk} = \begin{cases}
\sum_{j \neq k} Y_{jk}, \text{ if } j = k \\
- Y_{jk}, \text{ otherwise }
\end{cases}
\label{eq:busAdmMatrixDef}
\end{equation}
where $Y_{jk}$ is admittance of the branch connecting bus $j$ and bus $k$, we neglect the shunt admittances. From \eqref{eq:busAdmMatrixDef} we see that $\adm_\sigma$ is symmetric and it satisfies
\begin{equation}
\adm_\sigma \1 = 0,
\label{eq:admittance_null_space}
\end{equation}
i.e. $\1$ belongs to the Kernel of $\adm_\sigma$. Furthermore, it can be shown that if $\graph$, the graph associated to the electrical grid, is connected, then the kernel of $\adm_\sigma$ has dimension 1.

We model the substation as an ideal sinusoidal voltage source (\emph{slack bus}) at the distribution network nominal voltage $U_N$, with arbitrary and fixed angle $\phi$. We  consider, without loss of generality, $\phi=0$. We model all nodes except the substation as \emph{constant power devices}, or \emph{P-Q buses}. 
The system state satisfies the following equations
\begin{align}
&i = \adm_\sigma u \label{eq:nodevoltage}\\
&u_0 = U_N  \label{eq:PCCidealvoltgen}\\
&u_v i_v^* = p_v + i q_v \qquad v\neq 0 \label{eq:nodeconstpwr} 
\end{align}

The following Lemma \cite{Zampieri} introduces a particular and useful pseudo inverse of $\adm_\sigma$ for our topology detection algorithm.
\begin{lemma}
There exists a unique symmetric, positive semidefinite matrix $\green_\sigma \in \complexnumbers^{n\times n}$ such that
\begin{equation}
\begin{cases}
\green_\sigma \adm_\sigma = I - \1 \1_0^T \\
\green_\sigma \1_0 = 0.
\end{cases}
\label{eq:Xproperties}
\end{equation}
\label{lemma:X}
\end{lemma}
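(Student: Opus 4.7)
The plan rests on two facts from Section~\ref{sec:model}: $\adm_\sigma$ is complex symmetric and, since $\graph$ is connected, $\ker\adm_\sigma=\mathrm{span}(\1)$. I interpret $\1_0$ as the indicator of the slack bus (node $0$), so $I-\1\1_0^T$ vanishes on row $0$ and has column $0$ equal to $e_0-\1$.

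Uniqueness follows from a short difference argument. If $\green_\sigma,\green_\sigma'$ both satisfy \eqref{eq:Xproperties}, put $D:=\green_\sigma-\green_\sigma'$. Then $D\adm_\sigma=0$ and $D\1_0=0$; the first relation, together with symmetry of $\adm_\sigma$, forces each row of $D$ to lie in $\ker\adm_\sigma=\mathrm{span}(\1)$, so $D=v\1^T$ for some $v\in\complexnumbers^n$. Symmetry of $D$ then gives $v\1^T=\1 v^T$, i.e.\ $v=c\1$, and $D\1_0=c\1=0$ forces $c=0$.

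For existence I would construct $\green_\sigma$ explicitly by singling out the slack block. Write $\adm_\sigma=\bigl[\begin{smallmatrix}y_{00}&b^T\\ b&\adm_{-0}\end{smallmatrix}\bigr]$ with $\adm_{-0}\in\complexnumbers^{(n-1)\times(n-1)}$ the principal submatrix away from node $0$. The crucial preliminary step, and the one I expect to be most delicate, is invertibility of $\adm_{-0}$: if $\adm_{-0}v=0$, extend $v$ to $\tilde v:=(0,v^T)^T$ so that $(\adm_\sigma\tilde v)_j=0$ for all $j\neq 0$; applying $\1^T$ on the left and using $\1^T\adm_\sigma=0$ (which follows from symmetry and $\adm_\sigma\1=0$) makes the last entry vanish too, whence $\tilde v\in\ker\adm_\sigma=\mathrm{span}(\1)$, and $\tilde v_0=0$ yields $v=0$. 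Define $\green_\sigma:=\bigl[\begin{smallmatrix}0&0\\ 0&\adm_{-0}^{-1}\end{smallmatrix}\bigr]$. Symmetry and $\green_\sigma\1_0=0$ are immediate, and the block computation gives $\green_\sigma\adm_\sigma=\bigl[\begin{smallmatrix}0&0\\ \adm_{-0}^{-1}b&I_{n-1}\end{smallmatrix}\bigr]$; the bottom block of $\adm_\sigma\1=0$ reads $\adm_{-0}\1'=-b$, so $\adm_{-0}^{-1}b=-\1'$ and the product equals $I-\1\1_0^T$. The positive-semidefiniteness claim is the part where one must be careful about the PSD convention used in the complex-symmetric setting of \cite{Zampieri}; under a Hermitian-PSD reading it reduces to the classical grounded-Laplacian fact applied to $\adm_{-0}^{-1}$.
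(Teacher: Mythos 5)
The paper gives no proof of this lemma---it is imported directly from \cite{Zampieri}---and your argument (grounding the slack bus, showing the reduced admittance matrix $\adm_{-0}$ is invertible via $\ker\adm_\sigma=\mathrm{span}(\1)$, and padding $\adm_{-0}^{-1}$ with a zero row and column) is exactly the standard construction behind that cited result; both the existence computation and the uniqueness argument are correct. The one loose end is the positive-semidefiniteness claim, which you rightly flag: since $\adm_{-0}^{-1}$ is complex symmetric but not Hermitian, ``PSD'' cannot be read in the usual Hermitian sense and must be taken in the convention of \cite{Zampieri} (where it reduces to the grounded-Laplacian fact for the real conductance part), so your proof is complete for existence and uniqueness but only sketches that final property.
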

%
%
Applying Lemma \ref{lemma:X}, from \eqref{eq:nodevoltage} and \eqref{eq:PCCidealvoltgen} we can express voltages of the grid as a function of the currents and of the nominal voltage
\begin{equation}
u=\green_\sigma i+\1 U_N
\label{eq:u=Xi}
\end{equation}
The following proposition (\cite{Zampieri}) provides a approximation of the relationship between voltages and powers.
\begin{proposition}
Consider the physical model described by the set of nonlinear equations \eqref{eq:nodevoltage}, \eqref{eq:PCCidealvoltgen},  \eqref{eq:nodeconstpwr} and \eqref{eq:u=Xi}.
Node voltages then satisfy
\begin{equation}
u
=
U_N \1 + 
\frac{1}{U_N}
\green_\sigma
\bar s
+
o\left(\frac{1}{U_N}\right)
\label{eq:approximate_solution}
\end{equation}
(the little-o notation means that $\lim_{U_N\rightarrow \infty} \frac{o(f(U_N))}{f(U_N)} = 0$).
\label{pro:approximation}
\end{proposition}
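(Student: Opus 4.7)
The plan is to combine the linear-in-currents expression \eqref{eq:u=Xi} with the constant-power constraint \eqref{eq:nodeconstpwr} and then carry out an expansion around the no-load voltage $U_N\1$. First I would rewrite \eqref{eq:nodeconstpwr} as $i_v = \bar s_v/\bar u_v$ for $v\neq 0$; the slack-bus current $i_0$ can be set arbitrarily since $\green_\sigma \1_0 = 0$ implies that $\green_\sigma i$ depends only on the non-slack entries of $i$. Substituting into \eqref{eq:u=Xi} gives the nonlinear fixed-point equation
\begin{equation}
u = U_N \1 + \green_\sigma \left(\frac{\bar s}{\bar u}\right),
\label{eq:fixedpoint}
\end{equation}
where the division is meant component-wise on the non-slack coordinates (and $0$ on the slack coordinate).

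Next I would introduce the change of variable $u = U_N(\1 + w)$ so that \eqref{eq:fixedpoint} becomes
\begin{equation}
w = \frac{1}{U_N^2}\,\green_\sigma\!\left(\frac{\bar s}{\1 + \bar w}\right).
\label{eq:rescaled}
\end{equation}
The right-hand side of \eqref{eq:rescaled} defines a map $F_{U_N}(w)$ on a small ball around $0$ in $\complexnumbers^\nonodes$. For $U_N$ large enough, $F_{U_N}$ shrinks that ball by a factor $O(1/U_N^2)$ and is a contraction there, so by the Banach fixed-point theorem (equivalently, the implicit function theorem applied at $1/U_N = 0$) there is a unique small solution $w(U_N)$ with $\|w(U_N)\| = O(1/U_N^2)$. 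In particular, $\|u - U_N \1\| = O(1/U_N)$, justifying the leading-order scaling.

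Finally, to extract the explicit first-order term, I would expand $1/(1+\bar w_v) = 1 - \bar w_v + O(\|w\|^2)$ inside \eqref{eq:fixedpoint}:
\begin{equation}
u = U_N\1 + \frac{1}{U_N}\green_\sigma \bar s - \frac{1}{U_N}\green_\sigma\!\left(\bar s\odot \bar w\right) + O\!\left(\frac{\|w\|^2}{U_N}\right),
\end{equation}
where $\odot$ denotes the component-wise product. Since $\|w\| = O(1/U_N^2)$, both correction terms are $O(1/U_N^3) = o(1/U_N)$, producing the claimed expansion \eqref{eq:approximate_solution}.

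The main obstacle is the rigorous step, namely verifying that for sufficiently large $U_N$ the map $F_{U_N}$ in \eqref{eq:rescaled} really is a contraction on a ball whose radius we can control uniformly, so that a unique physical solution exists and the remainder is genuinely $o(1/U_N)$; the algebraic expansion itself is then routine. A secondary subtlety is handling the slack-bus coordinate consistently, which is taken care of by the property $\green_\sigma \1_0 = 0$ from Lemma \ref{lemma:X} together with the boundary condition \eqref{eq:PCCidealvoltgen}.
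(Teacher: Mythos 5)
Your argument is correct and follows essentially the same route the paper (and its cited reference \cite{Zampieri}) takes: substitute the constant-power relation \eqref{eq:nodeconstpwr} into \eqref{eq:u=Xi}, rescale around the no-load profile $U_N\1$, and expand to first order in $1/U_N$, with the slack coordinate handled via $\green_\sigma\1_0=0$. The fixed-point/contraction step you add to justify $\|u-U_N\1\|=O(1/U_N)$ and control the remainder is exactly the rigor the paper omits when it simply describes the result as ``a first order Taylor expansion w.r.t.\ the nominal voltage,'' so nothing is missing.
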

Equation \eqref{eq:approximate_solution} is derived from a first order Taylor expansion w.r.t. the nominal voltage $U_N$ of the equation relates powers and voltages. The approximate solution of \eqref{eq:approximate_solution} has been already used with success in state estimation \cite{schenato2014bayesian},  Volt/Var optimization \cite{cavraroTAC2015}, and the optimal power flow problem \cite{cavraro2014cdc}. 

There is always some noise associated with PMUs, i.e. the output of our PMU placed at bus $j$ is 
\begin{small}
\begin{equation}
y_j = u_j + e_j
\label{eq:PMUmeasurements}
\end{equation}
\end{small}
where $e_j \in \complexnumbers$ is the error caused by the measurement device. A common index for measurement error is the \emph{total vector error} (TVE) \cite{6111219}.
%
In this paper we assume that the loads have constant \emph{power factor}, and consequently 
\begin{equation}
\frac{[p(t_1)]_j}{[q(t_1)]_j} = \frac{[p(t_2)]_j}{[q(t_2)]_j}, \quad \forall j, t_1, t_2.
\end{equation}
Furthermore, loads have dynamic behavior, described by
\begin{align}
p(t+1) &= p(t) + n_p(t) 
\label{eq:deltap}
\end{align}
where $n_p(t)$ is a Gaussian random variable, $n_p(t) \sim \mathcal N(0,\sigma^2_p p_M^2)$.  A load measurement data set for five residential houses in the Texas, U. S. has been analyzed to drive the statistical load model. Some smart meters can measure loads every second or a couples of seconds. Load demand (kW) are recorded every seconds for a week. Statistical analysis of load variations between two consecutive seconds is presented in Table~\ref{tab:loadDataStat2}. 
\begin{table}
\caption{Load differences}
\label{tab:loadDataStat2}
\centering
\begin{tabular}{lcc}
           & Mean (kW) & SD (kW) \\
House 1    & 0.000     & 0.045   \\
House 2    & 0.000     & 0.070   \\
House 3    & 0.000     & 0.113   \\
House 4    & 0.000     & 0.110   \\
House 5    & 0.000     & 0.046   \\
Aggregate  & 0.000     & 0.184   
\end{tabular}
\end{table}
\begin{table}
\caption{Aggregate load differences for differents frequency}
\label{tab:loadDataStat3}
\centering
\begin{tabular}{lccc}
           & Mean (kW) & SD (kW) \\
$f=1\text{ Hz}$  	& 0.000     & 0.184   \\
$f=0.2\text{ Hz}$	& 0.000     & 0.425   \\
$f=0.1\text{ Hz}$	& 0.000     & 0.604   
\end{tabular}
\end{table}%
In the United States, a number of houses are connected to one distribution transformer. Therefore, the aggregated loads for five houses are considered as the reference for load variability in this paper. Lower measurement sampling time leads to higher uncertainty in load data variability. In Table \ref{tab:loadDataStat3}, the aggregate characterization for different frequencies is reported.

\section{Identification of Switching Actions }
\label{sec:propagation_action_switches}
The basic idea behind our proposed approach is that changes in switching status will create specific signatures in the voltage waveform measurements.
In order to develop the theoretical base for the proposed algorithm and its ease of mathematical proof, we make the following assumptions. 
%
%
\begin{assumption}
\label{ass:sameR/X}
All the lines have the same resistance over reactance ratio. Therefore, $\Im(Y_{jk}) = \alpha \Re(Y_{jk}), \forall Y_{jk}$.
\end{assumption}
\begin{assumption}
\label{ass:statuschng}
Only one switch can change its status at each time.
\end{assumption}
\begin{assumption}
\label{ass:connection}
The graph associated to the electrical network is always connected, i.e. that there are no admissible state in which any portion of the grid remains disconnected.
\end{assumption}
\begin{assumption}
\label{ass:known}
The initial switches status are known.
\end{assumption}
%
Assumption \ref{ass:sameR/X} will be relaxed in Section \ref{sec:results}, in order to test the algorithm in a more realistic scenario. However, it allows us to decompose the bus admittance matrix as follow.
\begin{equation}
\adm_{\sigma} = U \Sigma_R U^* + i U \Sigma_I U^*
\end{equation}
where $\Sigma_R,\Sigma_I$ are diagonal matrices whose diagonal entries are the non-zero eigenvalues of $\Im(\adm_{\sigma(t-1)})$ and $\Re(\adm_{\sigma(t-1)})$, $U$ is an orthonormal matrix that includes all the associated eigenvectors and $\Sigma_I = \alpha \Sigma_R$. From \eqref{eq:admittance_null_space}, it can be showed that $U$ spans the space orthogonal to $\1$. 
Furthermore, we have
\begin{equation}
\green_{\sigma} = (1 + i\alpha)^{-1} \Gamma U (\Sigma_R)^{-1} U^* \Gamma 
\label{eq:X1}
\end{equation}
with $\Gamma = (I - \1 e_0^T)$. Assumption \ref{ass:statuschng} is reasonable for the proposed algorithm framework: it works on a time scale of seconds, and typically the switches are electro-mechanical devices and their actions are not simultaneous. Finally, Assumption \ref{ass:connection} is always satisfied during the normal operation.

Assume that at time $t-1$ the switches status is described by $\sigma(t-1) =  \sigma_1$, resulting in the topology $\topology_{\sigma(t-1)}$ with bus admittance matrix $\adm_{\sigma(t-1)}$. Applying Proposition \ref{pro:approximation} and neglecting the infinitesimal term, the voltages can be expressed as
\begin{equation}
{u}(t-1)=
\green_{\sigma(t-1)} 
\frac{\bar s }{U_N} + \1 U_N
\label{eq:ux1}
\end{equation}
At time $t$ the $\ell$-th switch, that was previously open, changes its status. Let the new status be described by $\sigma(t) =  \sigma_2$, associated to the topology is $\topology_{\sigma(t)}$. Since we are basically adding the edge in which switch $\ell$ is placed from the graph that represents the grid, we can write 
\begin{equation}
\adm_{\sigma(t)} = \adm_{\sigma(t-1)} + y_{\ell} a_{\ell} a_{\ell}^T
\end{equation}
where  $y_{\ell}$ is the admittance of the line, and $a_{\ell}$ is the $\ell$-th row of the adjacency matrix associated with the $\topology_{\sigma(t)}$. %
Since $ a_{\ell} $ is orthogonal to  $\1$, there exists $b_{\ell}$ such that $Ub_{\ell} = a_{\ell}$. This allow us to write
\begin{small}
\begin{align}
\adm_{\sigma(t)} &= (1 + i\alpha)  U (\Sigma_R + \Re(Y_{\ell}) b_{\ell} b_{\ell}^T) U^* \notag\\
\green_{\sigma(t)} &= (1 + i\alpha)^{-1} \Lambda U (\Sigma_R + \Re(Y_{\ell}) b_{\ell} b_{\ell}^T)^{-1} U^* \Lambda^T \label{eq:X2}
\end{align}
\end{small}
The voltages satisfy
\begin{equation}
{u}(t) =
\green_{\sigma(t)} \frac{\bar s }{U_N}
+ \1 U_N
\label{eq:ux2}
\end{equation}
From \eqref{eq:X1} and \eqref{eq:X2}, the trend vector can be written as
\begin{equation}
\delta(t,t-1) = \Gamma \Phi_{\sigma(t-1)\sigma(t)} \Gamma^T \frac{\bar s }{U_N}
\label{eq:Delta1}
\end{equation}
where 
\begin{equation}
\Phi_{\sigma(t-1)\sigma(t)} = U \Sigma_R^{-1} U^* - U (\Sigma_R + \Re(Y_{\ell}) b_{\ell} b_{\ell} ^T)^{-1} U^*
\end{equation}
We can observe that when there is a switching action, the voltage profile varies in accordance to a specific topology transition. Since $[\sigma_1]_{-\ell} = [\sigma_2]_{-\ell}$, for the ease of notation in the following we will write $\Phi_{\sigma(t-1)\sigma(t)}$ as $\Phi_{[\sigma(t)]_{-\ell}}$.
The following Proposition shows a characteristic of $\Phi_{[\sigma(t)]_{-\ell}}$ that is crucial for  the development of our topology detection algorithm.
\begin{proposition}
For every topology transition from the state described by $\sigma(t-1)$ to the one described by $\sigma(t)$ by changing the switch $\ell$, $\Phi_{[\sigma(t)]_{-\ell}}$ is a rank one matrix.
\label{pro:Phi_rank}
\end{proposition}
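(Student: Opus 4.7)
The natural approach is a direct application of the Sherman–Morrison rank-one update formula to the inverse appearing in the second term of $\Phi_{[\sigma(t)]_{-\ell}}$. Since $\Sigma_R$ is a real diagonal matrix with non-zero entries (it collects the non-zero eigenvalues of $\Re(\adm_{\sigma(t-1)})$) and $\Re(Y_\ell) b_\ell b_\ell^T$ is a real symmetric rank-one perturbation, Sherman–Morrison yields
\begin{equation*}
(\Sigma_R + \Re(Y_\ell)\, b_\ell b_\ell^T)^{-1}
= \Sigma_R^{-1} - \frac{\Re(Y_\ell)\,\Sigma_R^{-1} b_\ell b_\ell^T \Sigma_R^{-1}}{1 + \Re(Y_\ell)\, b_\ell^T \Sigma_R^{-1} b_\ell}.
\end{equation*}

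Substituting this into the definition of $\Phi_{[\sigma(t)]_{-\ell}}$, the two $U\Sigma_R^{-1}U^*$ contributions cancel exactly, so I am left with
\begin{equation*}
\Phi_{[\sigma(t)]_{-\ell}} \;=\; \frac{\Re(Y_\ell)}{1 + \Re(Y_\ell)\, b_\ell^T \Sigma_R^{-1} b_\ell}\, \bigl(U\Sigma_R^{-1} b_\ell\bigr)\bigl(U\Sigma_R^{-1} b_\ell\bigr)^{T}.
\end{equation*}
This is a scalar multiple of an outer product of a single vector with itself, which is by definition a matrix of rank at most one. It then only remains to rule out the degenerate case. The vector $U\Sigma_R^{-1} b_\ell$ cannot vanish: $b_\ell$ is non-zero because $a_\ell$ is a non-zero row of the incidence matrix orthogonal to $\1$ and $U$ is an orthonormal basis of $\1^\perp$, and both $\Sigma_R^{-1}$ and $U$ are injective on that subspace. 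The scalar coefficient is non-zero as well, since $\Re(Y_\ell)\neq 0$ for a physical line.

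The only subtle point, and the one I expect to be the main obstacle, is justifying that $\Sigma_R + \Re(Y_\ell) b_\ell b_\ell^T$ is actually invertible so that Sherman–Morrison is applicable — equivalently, that the denominator $1 + \Re(Y_\ell)\, b_\ell^T \Sigma_R^{-1} b_\ell$ does not vanish. This follows from the physical structure of the problem: under Assumption \ref{ass:connection} the post-switching graph is also connected, so $\adm_{\sigma(t)}$ has a one–dimensional kernel spanned by $\1$, and by the same spectral decomposition used to write \eqref{eq:X1} its restriction to $\1^\perp$ is definite. In fact, since $b_\ell^T\Sigma_R^{-1}b_\ell$ has the same sign as the entries of $\Sigma_R^{-1}$, the quantity $\Re(Y_\ell) b_\ell^T\Sigma_R^{-1}b_\ell$ is positive and the denominator is strictly greater than one, which simultaneously guarantees the validity of the formula and the non-degeneracy of $\Phi_{[\sigma(t)]_{-\ell}}$, concluding that its rank is exactly one.
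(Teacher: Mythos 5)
Your proof is correct and follows essentially the same route as the paper, which invokes Miller's lemma on the inverse of a sum of matrices --- precisely the Sherman--Morrison rank-one update you apply --- to arrive at the same outer-product expression $\Phi_{[\sigma(t)]_{-\ell}} \propto (U\Sigma_R^{-1}b_\ell)(U\Sigma_R^{-1}b_\ell)^*$. The only differences are that you additionally justify the invertibility of $\Sigma_R + \Re(Y_\ell)b_\ell b_\ell^T$ and the non-vanishing of $U\Sigma_R^{-1}b_\ell$, details the paper leaves implicit, and that your scalar prefactor $\Re(Y_\ell)/(1+\Re(Y_\ell)\,b_\ell^T\Sigma_R^{-1}b_\ell)$ is in fact the corrected form of the paper's $\mu$, which misplaces the factor $\Re(Y_\ell)$ and writes $\Sigma_R$ where $\Sigma_R^{-1}$ is meant.
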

\begin{proof}
Exploiting \eqref{eq:X1}, \eqref{eq:X2}, using Ken Miller Lemma \cite{miller1981inverse} with some simple computations, we can write
\begin{equation}
\Phi_{[\sigma(t)]_{-\ell}} = 
\mu U \Sigma_R^{-1} b_\ell b_\ell^T \Sigma_R^{-1} U^*
\label{eq:PhiMiller}
\end{equation}
with 
$$\mu = \frac{1}{1 + \text{Tr}(\Re(Y_\ell) b_{\ell} b_{\ell}^T\Sigma_R)}.$$
It's trivial to see that $\Phi_{[\sigma(t)]_{-\ell}}$ is a rank one matrix with the non-zero eigenvalue
$\lambda_{\sigma(t)_{-\ell}} = \mu \|U \Sigma_R^{-1} b_{\ell} \|^2$
associated with the eigenvector 
$\hat g_{[\sigma(t)]_{-\ell}} = U \Sigma_R^{-1} b_\ell
$%
and thus can be written as 
$$\Phi_{[\sigma(t)]_{-\ell}} = \lambda_{[\sigma(t)]_{-\ell}} \hat g_{[\sigma(t)]_{-\ell}} \hat g_{[\sigma(t)]_{-\ell}}^*.$$
\end{proof}
The trend vector $\delta(t)$ shows the relationship between switching actions and voltage profile. Thanks to Proposition \ref{pro:Phi_rank} we can write it as
$$\delta(t,t-1) = \left[\lambda_{\sigma(t)_{-\ell}}  \hat g_{\sigma(t)_{-\ell}}^* \Gamma^T \frac{\bar s}{U_N} \right] \Gamma \hat g_{\sigma(t)_{-\ell}} 
$$
from which we see that 
\begin{equation}
\delta(t) \propto \Gamma \hat g_{[\sigma(t)]_{-\ell}}.
\label{eq:trend_prop}
\end{equation} 
Therefore, every specific switching action pattern that appears on the voltage profile is proportional to the eigenvector $\hat g_{[\sigma(t)]_{-\ell}}$, irrespective of other variables such as voltages $u$ and loads $s$ that describe the network operating state at the time. Thus, $g_{[\sigma(t)]_{-\ell}}$ can be seen as the \emph{particular signature} of the switch action. This fact is the cornerstone for the topology detection algorithm in this paper.
%

\section{Topology Detection Algorithm}
\label{sec:algorithm_statement}


Assuming the distribution network physical infrastructure and the initial switches status are known, we can construct a  \emph{library} $\mathcal L$ in which we collect all the normalized products between $(I - e_0 \1^T)$ and the eigenvectors for all possible switches action
\begin{equation}
\mathcal L_{\sigma(t-1)} = \{ g_{[\sigma(t)]_{-\ell}} : [\sigma(t)]_{-\ell} = [\sigma(t-1)]_{-\ell} \}
\label{eq:part_library}
\end{equation}
where
\begin{equation}
g_{[\sigma(t)]_{-\ell}} = \frac{\Gamma \hat g_{[\sigma(t)]_{-\ell}}}{\|\Gamma \hat g_{[\sigma(t)]_{-\ell}}\|}
\label{eq:wgvct_lib}
\end{equation}
The next step is comparing the trend vector $\delta(t,t-1)$ with the entries in the library to identify which switch changed its status.
The detection process is stated in Algorithm \ref{alg:DetAlg}.
\begin{small}
\begin{algorithm}
\caption{Topology Changes Detection}
	\begin{algorithmic}[1]
    \Require At each time $t$, $\sigma(t-1)$, \emph{min{\textunderscore}proj} = 0.98  
		\State $\sigma(t) \leftarrow \sigma(t-1)$ 
		
		\State each PMU at each node $j$ record voltage phasor measurements $y_j(t)$
		\State the algorithm builds the trend vector $\delta(t,t-1)$
		\State the algorithm projects $\delta(t,t-1)$ in the library $\mathcal L_{\mathcal P,\sigma(t)} $ obtaining the set of values $$\mathcal C = \left \{c_{[\sigma(t)]_{-\ell}} = \left \| \left \langle \frac{\delta}{\| \delta \|},g_{[\sigma(t)]_{-\ell}} \right \rangle \right \|, g_{[\sigma(t)]_{-\ell}} \in \mathcal L \right \};$$
		\If{$\max \mathcal C \geq \emph{min{\textunderscore}proj}$ } 
			\State $\sigma(t)  \leftarrow \arg \max \mathcal C$
		\EndIf
		\end{algorithmic}
\label{alg:DetAlg}
\end{algorithm}
\end{small}
The comparison is made by projecting the normalized actual trend vector $\frac{\delta(t,t-1)}{\| \delta (t,t-1) \|}$ onto the topology library $\mathcal L_{\sigma(t-1)}$. The projection is performed with the inner product, and it allows us to obtain the projection index for each vector in  $\mathcal L_{\sigma(t-1)}$
\begin{equation}
c_{[\sigma(t)]_{-\ell}} = \left \| \left \langle \frac{\delta}{\| \delta \|}, g_{[\sigma(t)]_{-\ell}} \right \rangle \right \|.
\label{eq:proj}
\end{equation}
If $c_{[\sigma(t)]_{-\ell}} \simeq 1$, it means that $\delta$ is spanned by $g_{[\sigma(t)]_{-\ell}}$ and then that the switch $\ell$ changed its status. Because of the approximation \eqref{eq:approximate_solution}, the projection will never be exactly one. Therefore, we will use a heuristic threshold, called \emph{min{\textunderscore}proj}, based on numerous simulations to select the right switch. If projection be greater than the threshold, the associated switch is selected. Based on simulations, the \emph{min{\textunderscore}proj} is setted to 0.98.
If there is no switching action, the trend vector will be zero as all the  $c_{[\sigma(t)]_{-\ell}}$, and the algorithm will not reveal any topology transition. 
Notice that the projection value is used to detect the change time too, differently of what proposed in \cite{cavraroISGT2015}, where instead we used the norm of a matrix built by measurements (the \emph{trend matrix}).
With a slight abuse of notation, we will say that the maximizer of $\mathcal C$ is the switches status $\sigma$ such that 
$[\sigma]_{-\ell} = [\sigma(t)]_{-\ell}$, $[\sigma]_{\ell} = 1$ if $[\sigma(t)]_{-\ell}=0$ or vice-versa $[\sigma]_{\ell} = 0$ if $[\sigma(t)]_{-\ell}=1$ and $c_{[\sigma(t)]_{-\ell}}$ its the maximum element in $\mathcal C$.
We tacitly assumed so far that all the buses are endowed with a PMU, but this is not a realistic scenario fora  distribution network. In presence of few measurements device the algorithm works the same way. The only difference is that we are allowed to take the few voltage measures
\begin{align}
y &= I_{\mathcal P} \green_{\sigma(t)} \frac{\bar s }{U_N} + \1 U_N
\label{eq:fewmeas}
\end{align}
where $I_{\mathcal P} \in [0,1]^{\nosensors \times \nonodes}$ is a matrix that select the entries of $u$ where a PMU is placed, and $\mathcal P$ is the set of nodes endowed with PMU. The trend vector will become
\begin{equation}
\delta(t_1,t_2) = y(t_1) - y(t_2)
\label{eq:trendvect_fewpmu}
\end{equation}
The elements of the library vector and their dimension change too. In fact one can easily show, using \eqref{eq:fewmeas} and retracing \eqref{eq:ux2} and \eqref{eq:PhiMiller} that \eqref{eq:wgvct_lib} becomes
\begin{equation}
g_{[\sigma(t)]_{-\ell}} = \frac{I_{\mathcal P}\Gamma\hat g_{[\sigma(t)]_{-\ell}}}{\|I_{\mathcal P}\Gamma \hat g_{[\sigma(t)]_{-\ell}}\|}
\label{eq:wgvct_few_lib}
\end{equation}
Of course, if we have only few PMUs, we have to tackle the \emph{observability problem}, i.e. we have to find a way to place the PMUs such that we are able to detect topology changes. Therefore, we have to minimize number of PMUs and maintain the system observability for topology detection.

\section{Measurements and Loads Uncertainty}
\label{sec:noisyscen}
So far, we considered the case in which the measurement devices were not affected by noise and loads were static.
In reality, there is some noise associated with PMUs. If we take \eqref{eq:PMUmeasurements} and \eqref{eq:deltap} into account, the trend vector becomes
\begin{align}
\delta(t_1,t_2) =   I_{\mathcal P}(&\green_{\sigma(t_1)} - \green_{\sigma(t_2)}) \frac{\bar s (t_2)}{U_N} + e_{t_1} - e_{t_2} + \notag \\
& + \frac{I_{\mathcal P} \green_{\sigma(t_1)}}{U_N} \sum_{t = t_2}^{t_1-1} n_p(t) - i n_q(t)
\label{eq:noisy_trendvect}
\end{align}
Therefore measurement noise and load dynamics yield non-zero values for the trend vector, even if there has not been any switching action. The projection index \eqref{eq:proj} may have values near unity, leading to wrong topology detection. 

When a switching action happens, branches of the network are changed and current flows change respectively, thus causing abrupt voltages variations. Therefore it helps to avoid topology detection errors caused by load uncertainty to consider a proper threshold \emph{min{\textunderscore}norm} for the trend vector norm. Moreover the additive noise can make the projection value of the trend vector onto the library considerably lower than one, even if a topology change occurred. This fact prompts us to use a threshold on the maximum projection value, \emph{min{\textunderscore}proj}, over which we consider if the trend vector change is due to a topology transition. To increase the accuracy of topology detection, the following steps are added to the algorithm. 
We assume the ideal case without load and measurement uncertainty with the $\ell$-th switch change its status at time $t_1$. Consider the trend vector 
$$\delta(t,t-\tau) = y(t) - y(t-\tau).$$
For $t<t_1$ and $t \geq t_1 + \tau$ the projections of the trend vector onto the library are all equal to zero, because
$$\delta(t,t-\tau) = y(t) - y(t-\tau) = 0$$
Instead for $t_1 \leq t < t_1 + \tau$, the trend vector is
$$\delta(t,t-\tau) = \Gamma \Phi_{[\sigma(t)]_{-\ell}} \Gamma^T \frac{\bar s }{U_N}$$
leading to a cluster of algorithm time instant of length $\tau$ (or $\frac \tau f$ seconds),   in which the maximum projection coefficient will be almost one.
A possible solution is thus to consider a trend vector built using not two consecutive measures, but considering measures separated by $\tau$ algorithm time istants 
$$\delta(t,t-\tau) = y(t) - y(t-\tau).$$
Assume that a topology change has happened at time $t$ when we have a cluster of algorithm time intervals of length $\tau$ ($\frac \tau f$ seconds).
The former observations lead to the Algorithm~\ref{alg:DetAlg_noisy} for topology detection with measurements noise and load variation.
\begin{small}
\begin{algorithm}
\caption{Topology Change Detection with Uncertainty}
    %
	\begin{algorithmic}[1]
  \Require At each time $t$, we are given the variables $\sigma(t-1)$, minimizer$(t-1)$, length{\textunderscore}cluster$(t-1)$  
	\State $\sigma(t) \leftarrow \sigma(t-1)$ 
		
	\State each PMU at each node $j$ record voltage phasor measurements $y_j(t)$
	\State the algorithm builds the trend vector $\delta(t,t-\tau)$
	\If{$\|\delta(t,t-\tau)\| <$ min{\textunderscore}norm} 
			\State $\delta(t,t-\tau)  \leftarrow 0$
			\State $\text{minimizer}(t) = 0$
			\State length{\textunderscore}cluster$(t)=0$
	\Else
			\State the algorithm projects $\delta(t,t-\tau)$ in the particular library $\mathcal L_{\mathcal P,\sigma(t)} $ obtaining the set of values $$\mathcal C = \left \{c_{[\sigma(t)]_{-\ell}} = \left \| \left \langle \frac{\delta}{\| \delta \|},g_{[\sigma(t)]_{-\ell}} \right \rangle \right \|, g_{[\sigma(t)]_{-\ell}} \in \mathcal L \right \};$$
			\If{$\max \mathcal C > $min{\textunderscore}proj } 
					\State $\text{minimizer}(t) = \arg \min \mathcal C$
					\If{$\text{minimizer}(t) = \text{minimizer}(t-1)$ } 
					\State length{\textunderscore}cluster$(t) \leftarrow$ length{\textunderscore}cluster$(t-1)$ + 1
							\If{length{\textunderscore}cluster$(t) = \tau$}
									\State $\sigma(t) \leftarrow \text{minimizer}(t) $
							\EndIf
					\Else 
					\State length{\textunderscore}cluster$(t) \leftarrow 1$
					\EndIf
			\EndIf
	\EndIf
	\end{algorithmic}
\label{alg:DetAlg_noisy}
\end{algorithm}
\end{small}
%

\section{Results, Discussions and Conclusions}
\label{sec:results}
We tested our algorithm for topology detection on the IEEE 33-bus distribution test feeder \cite{parasher2014load}, which is illustrated in the Figure \ref{fig:ieee33}.
In this testbed, there are five switches (namely $S_1$, $S_2$, $S_3$, $S_4$, $S_5$) that can be opened or closed, thus leading to the set of 32 possible topologies $\topology_1,\dots,\topology_{32}$. 
Because of the ratio between the number of buses and the number of switches, some very similar topologies can occur (for example the topology where only $S_1$ is closed and the one in which only $S_2$ is closed). 
In the IEEE33-bus test case, Assumption \ref{ass:sameR/X} about line impedances does not hold, making the test condition more realistic. Each bus of the network represent an aggregate of five houses, whose power demand is described by the statistical Gaussian model \eqref{eq:deltap}.  
\begin{figure}[]
\centering	
\includegraphics[width=0.25\textwidth]{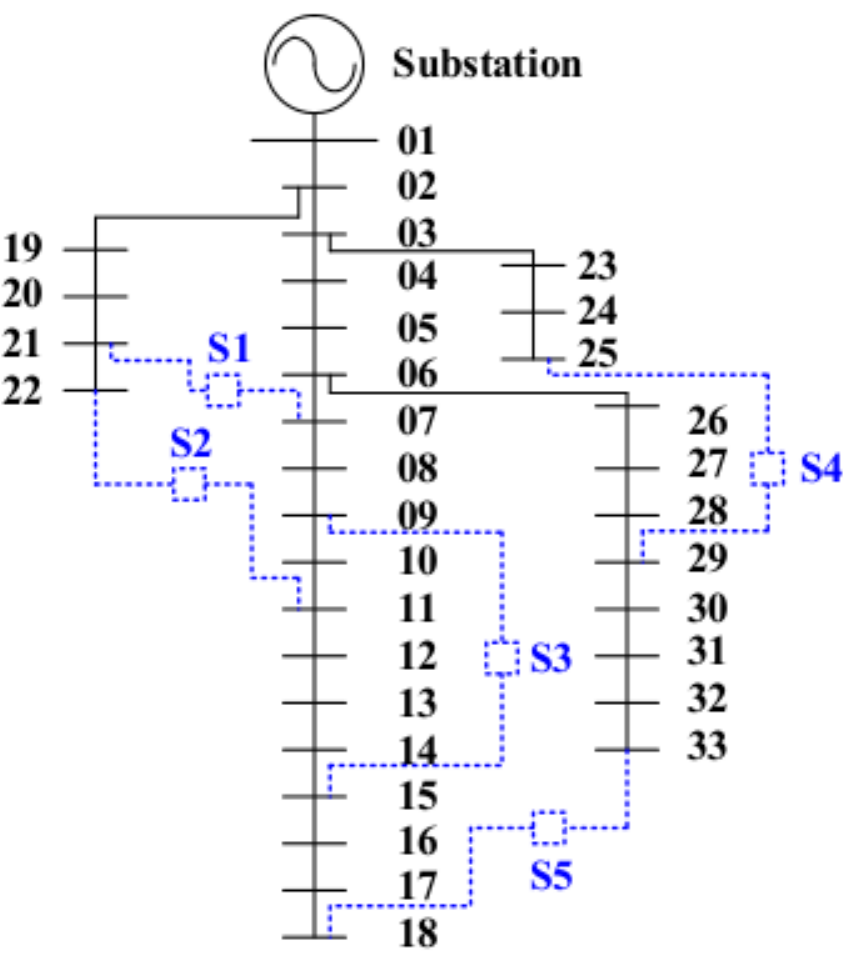}
\caption{Schematic representation of the IEEE33 buses distribution test case with the five switches}
\label{fig:ieee33}
\end{figure}
We tested the entire switch monitoring algorithm, in different situations. Firstly, we  consider the scenario in which the PMUs are affected by noise and the loads are not time varying, and then we add different levels of variation to them (associated with different measures frequencies). We assume that the buses are endowed with high precision devices, the $\mu$PMU \cite{microPMU}, affected by Gaussian noise such that $TVE \leq 0.05 \%$. It also complies with the IEEE standard C37.118.1-2011 for PMUs \cite{6111219}.
Furthermore, we vary the number and position of PMUs, considering the case in which every bus is endowed with a PMU, and the case in which we have only 7 PMUs deployed, whose placements have been chosen experimentally, after Monte Carlo simulations, as the one that minimizes the algorithm errors. Further research is needed to characterize a less onerous and more effective placement strategy.
The algorithm has been tested in each condition via 10000 Monte Carlo simulations The results are reported in Table~\ref{tab:sim33} and Table~\ref{tab:sim7}. We can see that, expected, the 33 PMUs scenario provides better performances. However the results with 7 PMUs are very close, showing the possibility of a satisfactory implementation of the algorithm also in a more realistic framework with few PMUs. Future developments include a deeper study about PMUs placement, better load characterization and further, analytic study of the thresholds \emph{min{\textunderscore}norm} and \emph{min{\textunderscore}proj} that yield the best performances of the algorithm.
\begin{table}
\caption{Results after 10000 runs with 33 PMUs}
\label{tab:sim33}
\centering
\begin{tabular}{lccccc}
SD [kV] & non & wrong  & decision & total & perc. of \\
	& detections & detection & errors & errors & errors (\%) \\
0  		& 0     & 50   	& 50    & 100	& 1.00 \\
0.184, ($f=1\text{ Hz}$)	& 0     & 64   	& 67    & 131 	& 1.31 \\
0.425, ($f=0.2\text{ Hz}$)	& 17		& 131   & 152 	& 300		& 3.00 \\
0.604, ($f=0.1\text{ Hz}$)	& 72   	& 211  	&	249		& 532		& 5.32 
\end{tabular}
\end{table}
\begin{table}
\caption{Results after 10000 runs with 7 PMUs}
\label{tab:sim7}
\centering
\begin{tabular}{lccccc}
Relative & non & wrong  & decision & total & perc. of \\
SD (\%)	& detections & detection & errors &errors & errors (\%) \\
0 		& 0     & 56   	&  56  & 112		& 1.12 \\
0.184, ($f=1\text{ Hz}$)	& 0     & 180   	&  185	& 365 	& 3.65 \\
0.425, ($f=0.2\text{ Hz}$)	& 31			& 199   	&  209  & 441		& 4.41 \\
0.604, ($f=0.1\text{ Hz}$)	& 76    & 245  		&	 298	& 619		& 6.19 
\end{tabular}
\end{table}
%
%

\bibliographystyle{IEEEtran}
\bibliography{bibTex_Topology}

\end{document}